\let\NAT@parse\undefined
\newsavebox{\ieeealgbox}
\newenvironment{boxedalgorithmic}
  {\begin{lrbox}{\ieeealgbox}
   \begin{minipage}{\dimexpr\columnwidth-2\fboxsep-2\fboxrule}
   \begin{algorithmic}[1]}
  {\end{algorithmic}
   \end{minipage}
   \end{lrbox}\noindent\fbox{\usebox{\ieeealgbox}}}
\newcommand{\bsy}[1]{\boldsymbol{#1}}
\newcommand{\supp}{\text{supp}}
\newtheorem{thm}{Theorem}
\renewcommand{\thelem}{\the \numexpr (\value{thm}+1) \relax.\arabic{lem}}
\newtheorem{lemSA}{Lemma}
\newtheorem{rem}{Remark}
\newcounter{algoCounter}
\title{Improving the Correlation Lower Bound for Simultaneous Orthogonal Matching Pursuit} 
\author{ Jean-Fran\c{c}ois Determe\thanks{Jean-Fran\c{c}ois Determe and Fran\c{c}ois Horlin are with the OPERA Wireless Communications Group, Universit\'e Libre de Bruxelles, 1050 Brussels, Belgium. E-mail: jdeterme@ulb.ac.be, fhorlin@ulb.ac.be. Jean-Fran\c{c}ois Determe is funded by the Belgian National Science Foundation (F.R.S.-FNRS).} \footnotemark[2]  \quad J\'er\^{o}me Louveaux\footnotemark[2] \quad  Laurent Jacques\thanks{Laurent Jacques, J\'{e}r\^{o}me Louveaux, and Jean-Fran\c cois Determe are with the ICTEAM departement, Universit\'e catholique de Louvain. E-mail: laurent.jacques@uclouvain.be, jerome.louveaux@uclouvain.be. Laurent Jacques is funded by the Belgian National Science Foundation (F.R.S.-FNRS).} \quad Fran\c{c}ois Horlin\footnotemark[1] }
\begin{document}
\maketitle

\begin{abstract}
The simultaneous orthogonal matching pursuit (SOMP) algorithm aims to find the joint support of a set of sparse signals acquired under a multiple measurement vector model. Critically, the analysis of SOMP depends on the maximal inner product of any atom of a suitable dictionary and the current signal residual, which is formed by the subtraction of previously selected atoms. This inner product, or correlation, is a key metric to determine the best atom to pick at each iteration. This paper provides, for each iteration of SOMP, a novel lower bound of the aforementioned metric for the atoms belonging to the correct and common joint support of the multiple signals. Although the bound is obtained for the noiseless case, its main purpose is to intervene in noisy analyses of SOMP. Finally, it is shown for specific signal patterns that the proposed bound outperforms state-of-the-art results for SOMP, and orthogonal matching pursuit (OMP) as a special case. 
\end{abstract}

\section{Introduction} \label{sec:intro}

The recovery of signals possessing a sparse representation in some orthonormal basis $\bsy{\Psi}$, \textit{i.e.}, signals fully expressed  using a limited number of vectors from $\bsy{\Psi}$, acquired by means of a linear measurement process is a problem that has gained in popularity in the last decade with the emergence of the \textit{compressive sensing} (CS) \cite{donoho2006compressed, candes2006stable} theory. This paper analyzes simultaneous orthogonal matching pursuit (SOMP)  \cite{tropp2006algorithms} for this sparse signal recovery problem involving possibly more than one sparse signal to be retrieved.
\subsection{Signal model}

Let us now define our models of interest. For the sake of clarity, we assume below that $\bsy{\Psi} = \bsy{I}$ but all our results can easily be adapted to the general case. Using $\lbrack n \rbrack := \lbrace 1, \dots, n \rbrace$, we define the support of any vector $\bsy{x}$ as $\mathrm{supp} (\bsy{x}) = \lbrace j \in \lbrack n \rbrack : x_j \neq 0 \rbrace$ with $\bsy{x}$ being $s$-sparse whenever $\| \bsy{x} \|_0 := |\mathrm{supp} (\bsy{x})| \leq s$. In this context, $x_j$ is the $j$th entry of $\bsy{x}$ while $| \cdot |$ denotes the cardinality. In a single measurement vector (SMV) signal model \cite{eldar2009robust}, we consider a $|\mathcal{S}|$-sparse signal $\bsy{x}$ whose support is $\mathcal{S}$ and the corresponding measurement vector $\bsy{y} \in \mathbb{R}^{m}$ gathering measurements of $\bsy{x}$:
\begin{equation} \label{eq:defSMV}
	\bsy{y} = \bsy{\Phi} \bsy{x},
\end{equation}
where $\bsy{\Phi} \in \mathbb{R}^{m \times n}$ describes the linear measurement process being used. We find convenient to refer to the columns of the measurement matrix $\bsy{\Phi}$ as \textit{atoms}. This terminology is usually employed when dealing with dictionaries, which implicitly exist in our signal model. For $\mathcal{S} := \supp (\bsy{x})$, Equation~(\ref{eq:defSMV}) indeed rewrites $\bsy{y} = \sum_{j \in \mathcal{S}} x_j \bsy{\phi}_j$ where $\bsy{\phi}_j$ denotes the $j$th column (or atom) of $\bsy{\Phi}$. Thus, recovering $\mathcal{S}$ is equivalent to determining which set of $|\mathcal{S}|$ columns from $\bsy{\Phi}$ enables one to fully express $\bsy{y}$ using the proper linear combination.\\

Even for $m < n$, it can be shown that several algorithms of reasonable complexity can recover any sufficiently sparse signal $\bsy{x}$  provided that the matrix $\bsy{\Phi}$ satisfies some properties. Among them, the restricted isometry property (RIP) \cite{candes2006stable} is probably one of the most ubiquitous in the CS literature. A matrix $\bsy{\Phi}$ satisfies the RIP of order $s$ with restricted isometry constant (RIC) $\delta_s$ if and only if $\delta_s \in \lbrack 0, 1)$ is the smallest $\delta$ such that
\begin{equation}\label{eq:defRIPRIC}
	(1 - \delta) \| \bsy{u} \|_2^2 \leq \| \bsy{\Phi} \bsy{u} \|_2^2 \leq (1 + \delta) \| \bsy{u} \|_2^2
\end{equation}
is true for all $s$-sparse vectors $\bsy{u}$. For a given measurement matrix $\bsy{\Phi}$, the RIC determines how close the $\ell_2$-norms of any $s$-sparse signal and its associated measurement vector are. \\

The signal sensing model (\ref{eq:defSMV}) can be generalized by considering the associated multiple measurement vector (MMV) signal model \cite{eldar2009robust}
\begin{equation} \label{eq:defMMV}
	\bsy{Y} = (\bsy{y}_1, \dots, \bsy{y}_K)  = \bsy{\Phi}\; (\bsy{x}_1, \dots,  \bsy{x}_K) = \bsy{\Phi} \bsy{X}
\end{equation}
where $\bsy{Y} \in \mathbb{R}^{m \times K}$ and $\bsy{X} \in \mathbb{R}^{n \times K}$. Note that Equation~(\ref{eq:defMMV}) also rewrites $\bsy{y}_k = \bsy{\Phi} \bsy{x}_k$ for $1 \leq k \leq K$. The notion of support is extended to the matrix $\bsy{X}$ by defining $\mathrm{supp} ( \bsy{X} ) := \cup_{k \in \lbrack K \rbrack} \mathrm{supp} (\bsy{x}_k)$. Before introducing SOMP, we present some conventions.\\

\vspace*{-1px}
\textit{Conventions: }  We consider the norms $\| \bsy{x} \|_{\infty} := \max_{j \in \lbrack n \rbrack} |x_j|$ and $\| \bsy{x} \|_p := (\sum_{j = 1}^n | x_j |^p)^{1/p}$ where $1 \leq p < \infty$ and $\bsy{x} \in \mathbb{R}^n$.  In this work, any vector is a column vector. For $\mathcal{S} \subseteq \lbrack n \rbrack$, the vector $\bsy{x}_{\mathcal{S}}$ is formed by the entries of $\bsy{x}$ whose indices belong to $\mathcal{S}$. In a likewise fashion,  $\bsy{\Phi}_{\mathcal{S}}$ is defined as the matrix formed by the columns of $\bsy{\Phi}$ indexed within $\mathcal{S}$. Similarly, $\bsy{X}^{\mathcal{S}}$ contains the rows of $\bsy{X}$ indexed by $\mathcal{S}$. The Moore-Penrose pseudoinverse, transpose, and conjugate transpose of any matrix $\bsy{\Phi}$ are denoted by $\bsy{\Phi}^{+}$, $\bsy{\Phi}^\mathrm{T}$, and $\bsy{\Phi}^{*}$, respectively. The range of $\bsy{\Phi}$ is written $\mathcal{R} (\bsy{\Phi} )$. Also, the inner product of two vectors $\bsy{x}$ and $\bsy{y}$ is equal to $\langle \bsy{x} , \bsy{y} \rangle := \bsy{x}^{\mathrm{T}} \bsy{y} = \bsy{y}^{\mathrm{T}} \bsy{x} $. It is also worth defining the matrix norms $\| \bsy{A} \|_{p \rightarrow q}  := \sup_{\| \bsy{z} \|_{p} = 1} \| \bsy{A} \bsy{z} \|_{q}$. For $\bsy{A} \in \mathbb{R}^{n \times K}$, we have $\| \bsy{A} \|_{\infty \rightarrow \infty} = \max_{j \in \lbrack n \rbrack} \sum_{k=1}^K | A_{j,k} |$ as well as $\| \bsy{A} \|_{2 \rightarrow 2} = \sqrt{\lambda_{\mathrm{max}} (\bsy{A}^{*} \bsy{A}) }$ \cite[Lemma A.5]{foucart2013mathematical} where $\lambda_{\mathrm{max}}$ denotes the maximal eigenvalue. Finally, the Frobenius norm of $\bsy{A}$ is denoted by $\| \bsy{A} \|_{\mathrm{F}}$.

\subsection{Orthogonal Matching Pursuit algorithms} \label{subsec:SOMP}

We present in this section the class of OMP algorithms for SMV and MMV models.
In the event where the sparse signals $\bsy{x}_k$ to be recovered happen to share similar if not identical supports, it is interesting to perform a joint support recovery \cite{gribonval2008atoms}, \textit{i.e.}, a single and common support $\hat{\mathcal{S}}$ is jointly estimated for all the $K$ signals $\bsy{x}_k$. SOMP \cite{tropp2006algorithms}, which is described in Algorithm \ref{alg:SOMP}, performs a joint support recovery. This  algorithm iteratively picks atoms within $\bsy{\Phi}$ to simultaneously approximate the $K$ measurement vectors $\bsy{y}_k$. SOMP reduces to orthogonal matching pursuit (OMP) \cite{pati1993orthogonal, davis1997adaptive} for $K = 1$.

\begin{figure}[!h]
	\textsc{Algorithm \refstepcounter{algoCounter}\label{alg:SOMP}\arabic{algoCounter}}:\\ 
	Simultaneous orthogonal matching pursuit (SOMP)\\
	
	\vspace{-2mm}
	%\begin{algorithmic}[1]
	\begin{boxedalgorithmic}
		\small
		\REQUIRE $\bsy{Y} \in \mathbb{R}^{m \times K}$, $\bsy{\Phi} \in \mathbb{R}^{m \times n}$, $s \geq 1$
		\STATE Initialization: $\bsy{R}^{(0)} \leftarrow \bsy{Y}$ and $\mathcal{S}_0 \leftarrow \emptyset$
		% \COMMENT{Initialization}
		\STATE $t \leftarrow 0$
		\WHILE{$t < s$}
		\STATE Determine the atom of $\bsy{\Phi}$ to be included in the support: \\ $j_t \leftarrow \mathrm{argmax}_{j \in \lbrack n \rbrack} ( \| (\bsy{R}^{(t)})^{\mathrm{T}} \bsy{\phi}_j \|_1 )$
		\STATE Update the support : $\mathcal{S}_{t+1} \leftarrow \mathcal{S}_{t} \cup \left\lbrace j_t \right\rbrace$
		\STATE Projection of each measurement vector onto $\mathcal{R}(\boldsymbol{\Phi}_{\mathcal{S}_{t+1}})$: \\$\bsy{Y}^{(t+1)} \leftarrow \boldsymbol{\Phi}_{\mathcal{S}_{t+1}} \boldsymbol{\Phi}_{\mathcal{S}_{t+1}}^{+} \bsy{Y}$
		\STATE Projection of each measurement vector onto $\mathcal{R}(\boldsymbol{\Phi}_{\mathcal{S}_{t+1}})^{\perp}$~: \\ $\bsy{R}^{(t+1)} \leftarrow \bsy{Y} - \bsy{Y}^{(t+1)}$
		\STATE $t \leftarrow t + 1$
		\ENDWHILE
		\RETURN $\mathcal{S}_s$ \COMMENT{Support at last step}
		%\end{algorithmic}
	\end{boxedalgorithmic}
	%\caption{Simultaneous orthogonal matching pursuit (SOMP) algorithm}
\end{figure}

At each iteration $t$, SOMP adds one atom to the estimated support (step 5). The criterion to determine which atom to include is to pick the atom maximizing $\| (\bsy{R}^{(t)})^{\mathrm{T}} \bsy{\phi}_j \|_1 = \sum_{k=1}^K | \langle \bsy{r}_k^{(t)}, \bsy{\phi}_j \rangle |$ (step 4) for the current residual matrix $\bsy{R}^{(t)}$ where $\bsy{r}_k^{(t)}$ denotes the $k$th column of $\bsy{R}^{(t)}$. Note that the previous sum is a way to simultaneously account for all the measurement vectors $\bsy{y}_k$ and their corresponding residuals $\bsy{r}_k^{(t)}$. The residual is then updated so that it is orthogonal to the subspace spanned by the atoms indexed by the current estimated support (steps 5 and 6.) The orthogonal projection matrix $\bsy{P}^{(t)} := \boldsymbol{\Phi}_{\mathcal{S}_{t}} \boldsymbol{\Phi}_{\mathcal{S}_{t}}^{+}$ allows to perform the projection onto $\mathcal{R}(\boldsymbol{\Phi}_{\mathcal{S}_{t+1}})$, \textit{i.e.}, the space spanned by the columns of $\boldsymbol{\Phi}_{\mathcal{S}_{t+1}}$. Using the $\ell_1$-norm for the decision criterion of SOMP is not the only possible choice. Generally, $p$-SOMP refers to the variant of SOMP for which the $\ell_p$-norm intervenes \cite{gribonval2008atoms}. Unless otherwise specified, we assume that SOMP uses the $\ell_1$-norm. The algorithm finishes when the size of the estimated support reaches $s$. If possible, $s$ is usually chosen close to $|\mathcal{S}|$.

\subsection{Contribution and its connection with the noisy case} \label{subsec:contribNoisyLink}
In this paper, we provide a novel RIP-based lower bound of the quantity $\| \bsy{\Phi}_{\mathcal{S}}^{\mathrm{T}} \bsy{R}^{(t)} \|_{\infty \rightarrow \infty} = \max_{j \in \mathcal{S}} ( \| (\bsy{R}^{(t)})^{\mathrm{T}} \bsy{\phi}_j \|_1 )$, which is the maximum SOMP metric among the correct atoms without noise. In particular, we are interested in a lower bound expressed as $\| \bsy{\Phi}_{\mathcal{S}}^{\mathrm{T}} \bsy{R}^{(t)} \|_{\infty \rightarrow \infty} \geq \psi \tau_X$ where $\psi$ only depends on $\mathcal{S}$ and $\bsy{\Phi}$ while $\tau_X$ is determined by $\bsy{X}$. This type of bound typically intervenes in noisy analyses \cite{cai2011orthogonal, dan2014robustness, determe2016noise} of OMP or SOMP, \textit{i.e.}, for the signal model $\bsy{Y} = \bsy{\Phi} \bsy{X} + \bsy{E}$ where $\bsy{E} = (\bsy{e}_1, \dots, \bsy{e}_K)$ is the noise term. For example, \cite[Theorem 3]{determe2016noise} shows that, if each noise vector obeys $\bsy{e}_k \sim \mathcal{N}(\bsy{0}, \sigma_k^2 \bsy{I}_{m \times m})$, then the probability of SOMP identifying at least one incorrect support entry during $s$ iterations is upper bounded by $\gamma(s, |\mathcal{S}|, n) \exp (-\Delta \mathbb{E}^2/(8\| \bsy{\sigma} \|_2^2))$ where $\gamma(s, |\mathcal{S}|, n)$ increases with $s$, $|\mathcal{S}|$, and $n$. $\Delta \mathbb{E}$ can be written as $\Delta \mathbb{E} = (1-1/\Gamma)\psi \tau_X - \sqrt{2/\pi} \| \bsy{\sigma} \|_1$ where the lower bound $\psi \tau_X$ intervenes. The quantity $\Gamma$, which is not studied in the present paper, connects the highest noiseless SOMP metrics for correct and incorrect atoms, \textit{i.e.}, $\Gamma$ lower bounds the ratio $\| \bsy{\Phi}_{\mathcal{S}}^{\mathrm{T}} \bsy{R}^{(t)} \|_{\infty \rightarrow \infty}/\| \bsy{\Phi}_{\overline{\mathcal{S}}}^{\mathrm{T}} \bsy{R}^{(t)} \|_{\infty \rightarrow \infty}$ (see \cite{determe2016exact} for more details). It is worth pointing out again that, despite their intervening in the noisy analysis, the quantities $\Gamma$, $\psi$, and $\tau_X$ are all defined on the basis of noiseless signals. 

\subsection{Outline \& related work overview}

Section~\ref{sec:contrib} explains in details the contribution. Section~\ref{sec:relWork} states and comments an alternative bound in the literature. Finally, Section~\ref{sec:discussCompare} compares our contribution against the alternative lower bound. It is shown that, under several common sensing scenarios, our result outperforms its counterpart, even for $K=1$, \textit{i.e.}, for OMP.  Our contribution can be used in several theoretical analyses \cite{cai2011orthogonal, dan2014robustness} of OMP and SOMP in the noisy case by replacing the older bound of Section~\ref{sec:relWork} (Theorem~\ref{thm:frobLBCorrelationResidual}) by the one obtained in this paper, \textit{i.e.}, Theorem~\ref{thm:newLBCorrelationResidual}. For example, \cite[Lemma 4.1]{dan2014robustness}  can be  partially replaced by Theorem~\ref{thm:newLBCorrelationResidual}. The cases under which this replacement leads to less stringent conditions on whether an iteration is successful are hence discussed in Section~\ref{sec:discussCompare}. Other related algorithms include CoSaMP \cite{needell2009cosamp}, Subspace pursuit \cite{dai2009subspace}, and orthogonal matching pursuit with replacement (OMPR) \cite{jain2011orthogonal}. Since the decisions of these algorithms also rely on the highest inner products of a residual and the atoms, our methodology might provide relevant insights for them as well.

\section{Contribution}\label{sec:contrib}

Lemma~\ref{lem:technicalRes} is an upper bound on $\| \bsy{A} \|_{\infty \rightarrow \infty}$ depending on the distribution of the eigenvalues of $\bsy{A}$. This lemma is needed in the proof of our main contribution, \textit{i.e.}, Theorem~\ref{thm:newLBCorrelationResidual}.

\begin{lemSA}\label{lem:technicalRes}
	Let $\alpha \in \mathbb{R}$ and $\bsy{A} \in \mathbb{R}^{d \times d}$ be a normal matrix, \textit{i.e.}, $\bsy{A} \bsy{A}^{\mathrm{T}} = \bsy{A}^{\mathrm{T}} \bsy{A}$. Let the vector $\bsy{\theta}(\alpha)$ be composed of the elements $\theta_j(\alpha) = \lambda_j - \alpha$ where $\lambda_j$ is the $j$th eigenvalue of $\bsy{A}$. Then, $
	\| \bsy{A} \|_{\infty \rightarrow \infty} \leq |\alpha| + \sqrt{d} \| \bsy{\theta} (\alpha) \|_{\infty}$.
\end{lemSA}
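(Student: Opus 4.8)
The plan is to split $\bsy{A} = \alpha\bsy{I} + (\bsy{A} - \alpha\bsy{I})$, apply the triangle inequality for the $\|\cdot\|_{\infty\rightarrow\infty}$ norm, and control the shifted matrix $\bsy{A}-\alpha\bsy{I}$ through its spectral norm, which is easy to evaluate because normality is preserved under a scalar shift.

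First I would observe that $\|\alpha\bsy{I}\|_{\infty\rightarrow\infty} = |\alpha|$, which is immediate from the row-sum formula $\|\bsy{B}\|_{\infty\rightarrow\infty} = \max_j \sum_k |B_{j,k}|$, and that this norm is subadditive, so $\|\bsy{A}\|_{\infty\rightarrow\infty} \le |\alpha| + \|\bsy{A}-\alpha\bsy{I}\|_{\infty\rightarrow\infty}$. Next I would establish the generic comparison $\|\bsy{M}\|_{\infty\rightarrow\infty} \le \sqrt{d}\,\|\bsy{M}\|_{2\rightarrow 2}$ for any $\bsy{M}\in\mathbb{R}^{d\times d}$: with $\bsy{e}_j$ the $j$th canonical basis vector, the $j$th row of $\bsy{M}$ is $\bsy{M}^{\mathrm{T}}\bsy{e}_j$, so by Cauchy--Schwarz $\sum_k |M_{j,k}| \le \sqrt{d}\,\big(\sum_k M_{j,k}^2\big)^{1/2} = \sqrt{d}\,\|\bsy{M}^{\mathrm{T}}\bsy{e}_j\|_2 \le \sqrt{d}\,\|\bsy{M}^{\mathrm{T}}\|_{2\rightarrow 2} = \sqrt{d}\,\|\bsy{M}\|_{2\rightarrow 2}$, where the last step uses that a matrix and its transpose have the same spectral norm. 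Taking the maximum over $j$ gives the claim.

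I would then apply this with $\bsy{M} = \bsy{A}-\alpha\bsy{I}$. Since $\bsy{A}$ is normal, $\bsy{A}-\alpha\bsy{I}$ is normal as well, hence unitarily diagonalizable, and its spectral norm equals its spectral radius; concretely, $(\bsy{A}-\alpha\bsy{I})^{*}(\bsy{A}-\alpha\bsy{I})$ is diagonalized in the same eigenbasis as $\bsy{A}-\alpha\bsy{I}$, with eigenvalues $|\lambda_j-\alpha|^2$, so $\|\bsy{A}-\alpha\bsy{I}\|_{2\rightarrow 2} = \sqrt{\lambda_{\mathrm{max}}\big((\bsy{A}-\alpha\bsy{I})^{*}(\bsy{A}-\alpha\bsy{I})\big)} = \max_j |\lambda_j - \alpha| = \|\bsy{\theta}(\alpha)\|_{\infty}$ (read as the largest modulus if some $\lambda_j$ are complex). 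Chaining the three bounds yields $\|\bsy{A}\|_{\infty\rightarrow\infty} \le |\alpha| + \sqrt{d}\,\|\bsy{\theta}(\alpha)\|_{\infty}$.

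There is no genuine obstacle here: the argument is a one-line triangle inequality plus a standard $\ell_1$-versus-$\ell_2$ row estimate. The only points deserving a careful sentence are that normality survives the shift by $\alpha\bsy{I}$ (so that spectral norm $=$ spectral radius applies to $\bsy{A}-\alpha\bsy{I}$) and the identification of the eigenvalues of $\bsy{A}-\alpha\bsy{I}$ with $\theta_j(\alpha)=\lambda_j-\alpha$.
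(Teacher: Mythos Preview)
Your proof is correct and follows essentially the same approach as the paper: split $\bsy{A}=\alpha\bsy{I}+(\bsy{A}-\alpha\bsy{I})$, apply the triangle inequality, and bound the shifted part via $\|\cdot\|_{\infty\to\infty}\le\sqrt{d}\,\|\cdot\|_{2\to2}$ together with the fact that the spectral norm of the normal matrix $\bsy{A}-\alpha\bsy{I}$ equals $\max_j|\lambda_j-\alpha|$. The only cosmetic difference is that the paper carries the explicit unitary diagonalization $\bsy{A}=\bsy{Q}\bsy{\Lambda}\bsy{Q}^{*}$ throughout, whereas you work directly with $\bsy{A}-\alpha\bsy{I}$ and invoke ``spectral norm $=$ spectral radius for normal matrices''; the content is identical.
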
%
\begin{proof}
	The spectral theorem establishes that normal matrices are unitarily diagonalizable. We thus consider the eigenvalue decomposition $\bsy{A} = \bsy{Q} \bsy{\Lambda} \bsy{Q}^{*}$ where $\bsy{Q}$ is unitary. Defining $\bsy{\Delta}(\alpha) := \mathrm{diag}(\bsy{\theta}(\alpha))$, we have $\bsy{\Lambda} = \alpha \bsy{I} + \bsy{\Delta}(\alpha)$. Thus, applying the triangle inequality yields
	\begin{equation*}
		\| \bsy{A} \|_{\infty \rightarrow \infty} \leq |\alpha| \underbrace{ \|  \bsy{Q} \bsy{Q}^{*} \|_{\infty \rightarrow \infty}}_{= 1} + \| \bsy{Q} \bsy{\Delta}(\alpha) \bsy{Q}^{*}\|_{\infty \rightarrow \infty}.
	\end{equation*}
	For $\bsy{B} \in \mathbb{R}^{d \times d}$, the inequality $\| \bsy{B} \|_{\infty \rightarrow \infty} \leq \sqrt{d} \| \bsy{B} \|_{2 \rightarrow 2}$ \cite{golub2012matrix} provides $ \| \bsy{A} \|_{\infty \rightarrow \infty} \leq |\alpha|  + \sqrt{d} \| \bsy{Q} \bsy{\Delta}(\alpha) \bsy{Q}^{*}\|_{2 \rightarrow 2} = |\alpha|  + \sqrt{d} \| \bsy{\theta}(\alpha) \|_{\infty}$.
\end{proof}

Making use of the inequality derived in Lemma~\ref{lem:technicalRes}, Theorem~\ref{thm:newLBCorrelationResidual} states our novel lower bound on $\| \bsy{\Phi}_{\mathcal{S}}^{\mathrm{T}} \bsy{R}^{(t)} \|_{\infty \rightarrow \infty}$.

\begin{thm}\label{thm:newLBCorrelationResidual}
	Let us assume that SOMP has picked only correct atoms before iteration $t$, \textit{i.e.}, $\mathcal{S}_t \subset \mathcal{S}$. We denote $\mathcal{J}_t = \mathcal{S} \backslash \mathcal{S}_t$ the set that contains the indices of the correct atoms yet to be selected at iteration $t$. If $\bsy{\Phi}$ satisfies the RIP with $|\mathcal{S}|$-th RIC $\delta_{|\mathcal{S}|}  < 1$, then
	\begin{equation}
		\|\bsy{\Phi}_{\mathcal{S}}^{\mathrm{T}} \bsy{R}^{(t)} \|_{\infty \rightarrow \infty}  \geq  \dfrac{(1-\delta_{|\mathcal{S}|}) (1+\delta_{|\mathcal{S}|}) }{1 + \sqrt{|\mathcal{S}|-t} \; \delta_{|\mathcal{S}|}} \| \bsy{X}^{\mathcal{J}_t} \|_{\infty \rightarrow \infty}.
	\end{equation}
	\begin{proof} 
		We have $\bsy{\Phi}_{\mathcal{S}}^{\mathrm{T}} \bsy{R}^{(t)} =
		\bsy{\Phi}_{\mathcal{S}}^{\mathrm{T}} (\bsy{I} - \bsy{P}^{(t)}) \bsy{\Phi}_{\mathcal{S}} \bsy{X}^{\mathcal{S}} = \bsy{\Phi}_{\mathcal{S}}^{\mathrm{T}} (\bsy{I} - \bsy{P}^{(t)}) \bsy{\Phi}_{\mathcal{J}_t} \bsy{X}^{\mathcal{J}_t}$ because $(\bsy{I} - \bsy{P}^{(t)}) \bsy{\Phi}_{\mathcal{S}} \bsy{X}^{\mathcal{S}} = (\bsy{I} - \bsy{P}^{(t)}) (\bsy{\Phi}_{\mathcal{S}_t} \bsy{X}^{\mathcal{S}_t} + \bsy{\Phi}_{\mathcal{J}_t} \bsy{X}^{\mathcal{J}_t}) = (\bsy{I} - \bsy{P}^{(t)})  \bsy{\Phi}_{\mathcal{J}_t} \bsy{X}^{\mathcal{J}_t}$. For $j \in \mathcal{S}_t$, $\langle \bsy{\phi}_j,   (\bsy{I} - \bsy{P}^{(t)}) \bsy{z} \rangle = 0$ for every vector $\bsy{z}$. Thus,
		
		\begin{align*}
			\| \bsy{\Phi}_{\mathcal{S}}^{\mathrm{T}} \bsy{R}^{(t)} \|_{\infty \rightarrow \infty}  & =  \| \bsy{\Phi}_{\mathcal{S}}^{\mathrm{T}} (\bsy{I} - \bsy{P}^{(t)}) \bsy{\Phi}_{\mathcal{J}_t} \bsy{X}^{\mathcal{J}_t}\|_{\infty \rightarrow \infty}\\
			& = \max_{j \in \mathcal{S}} \sum_{k=1}^K | \langle \bsy{\phi}_j,   (\bsy{I} - \bsy{P}^{(t)}) (\bsy{\Phi}_{\mathcal{J}_t} \bsy{X}^{\mathcal{J}_t})_k \rangle |\\
			& = \max_{j \in \mathcal{J}_t} \sum_{k=1}^K | \langle \bsy{\phi}_j,   (\bsy{I} - \bsy{P}^{(t)}) (\bsy{\Phi}_{\mathcal{J}_t} \bsy{X}^{\mathcal{J}_t})_k \rangle |\\
			& = \| \bsy{\Phi}_{\mathcal{J}_t}^{\mathrm{T}} (\bsy{I} - \bsy{P}^{(t)}) \bsy{\Phi}_{\mathcal{J}_t} \bsy{X}^{\mathcal{J}_t}\|_{\infty \rightarrow \infty}.
		\end{align*}
		We follow the steps of \cite[Proof of Theorem 10]{gribonval2008atoms} and use the inequality $\| \bsy{A} \bsy{B} \|_{\infty \rightarrow \infty} \leq \| \bsy{A} \|_{\infty \rightarrow \infty} \| \bsy{B} \|_{\infty \rightarrow \infty}$. If $\bsy{B} = \bsy{C} \bsy{D}$ and $\bsy{C}$ is invertible, then, with $\bsy{A} = \bsy{C}^{-1}$, the inequality above implies $\| \bsy{C}^{-1} \bsy{C} \bsy{D} \|_{\infty \rightarrow \infty} = \| \bsy{D} \|_{\infty \rightarrow \infty} \leq \| \bsy{C}^{-1} \|_{\infty \rightarrow \infty} \| \bsy{C} \bsy{D} \|_{\infty \rightarrow \infty}$. Replacing $\bsy{D}$ with $\bsy{X}^{\mathcal{J}_t}$ and $\bsy{C}$ with $\bsy{\Phi}_{\mathcal{J}_t}^{\mathrm{T}} (\bsy{I} - \bsy{P}^{(t)}) \bsy{\Phi}_{\mathcal{J}_t}$ yields
		\begin{equation*}
			\| \bsy{\Phi}_{\mathcal{S}}^{\mathrm{T}} \bsy{R}^{(t)}  \|_{\infty \rightarrow \infty} \geq  \dfrac{\| \bsy{X}^{\mathcal{J}_t} \|_{\infty \rightarrow \infty}}{\| (\bsy{\Phi}_{\mathcal{J}_t}^{\mathrm{T}} (\bsy{I} - \bsy{P}^{(t)}) \bsy{\Phi}_{\mathcal{J}_t})^{-1} \|_{\infty \rightarrow \infty}}
		\end{equation*}
		It can be shown \cite[Lemma 5]{cai2011orthogonal} that, under the condition $\mathcal{S}_t \subseteq \mathcal{S}$, we have  $\lambda_{\mathrm{min}} (\bsy{\Phi}_{\mathcal{J}_t}^{\mathrm{T}} (\bsy{I} - \bsy{P}^{(t)}) \bsy{\Phi}_{\mathcal{J}_t}) \geq \lambda_{\mathrm{min}} (\bsy{\Phi}_{\mathcal{S}}^{\mathrm{T}} \bsy{\Phi}_{\mathcal{S}}) \geq 1 - \delta_{|\mathcal{S}|} > 0$ (see \cite[Remark 1]{dai2009subspace} regarding the penultimate inequality). The absence of zero eigenvalues thus shows that the matrix $\bsy{\Phi}_{\mathcal{J}_t}^{\mathrm{T}} (\bsy{I} - \bsy{P}^{(t)}) \bsy{\Phi}_{\mathcal{J}_t}$ is full rank and invertible. Note that it can also be shown \cite[Lemma 5]{cai2011orthogonal} that  $\lambda_{\mathrm{max}} (\bsy{\Phi}_{\mathcal{J}_t}^{\mathrm{T}} (\bsy{I} - \bsy{P}^{(t)}) \bsy{\Phi}_{\mathcal{J}_t}) \leq  \lambda_{\mathrm{max}} (\bsy{\Phi}_{\mathcal{S}}^{\mathrm{T}} \bsy{\Phi}_{\mathcal{S}}) \leq 1 + \delta_{|\mathcal{S}|}$  (see \cite[Remark 1]{dai2009subspace} for the last inequality). As a result, the eigenvalues of $(\bsy{\Phi}_{\mathcal{J}_t}^{\mathrm{T}} (\bsy{I} - \bsy{P}^{(t)}) \bsy{\Phi}_{\mathcal{J}_t})^{-1}$ belong to $\lbrack 1/(1+\delta_{|\mathcal{S}|}); 1/(1-\delta_{|\mathcal{S}|}) \rbrack$. Due to the idempotency and symmetry of the orthogonal projectors, $\bsy{\Phi}_{\mathcal{J}_t}^{\mathrm{T}} (\bsy{I} - \bsy{P}^{(t)}) \bsy{\Phi}_{\mathcal{J}_t} = \bsy{\Phi}_{\mathcal{J}_t}^{\mathrm{T}} (\bsy{I} - \bsy{P}^{(t)})^{\mathrm{T}} (\bsy{I} - \bsy{P}^{(t)}) \bsy{\Phi}_{\mathcal{J}_t}$, which implies that the matrix of interest and its inverse are symmetric and consequently normal.\\
		
		\noindent Instead of using the coherence of $\bsy{\Phi}$ as in \cite{gribonval2008atoms}, we consider the eigenvalues of  $(\bsy{\Phi}_{\mathcal{J}_t}^{\mathrm{T}} (\bsy{I} - \bsy{P}^{(t)}) \bsy{\Phi}_{\mathcal{J}_t})^{-1}$. 
		Since this last matrix is normal, Lemma~\ref{lem:technicalRes} can be used with 
		\begin{equation*}
			\alpha = \dfrac{1}{2} \left( \dfrac{1}{1+\delta_{|\mathcal{S}|}} + \dfrac{1}{1-\delta_{|\mathcal{S}|}} \right) = \dfrac{1}{(1+\delta_{|\mathcal{S}|})(1-\delta_{|\mathcal{S}|})},
		\end{equation*}
		which is the arithmetic mean of the lowest and highest possible eigenvalues of $(\bsy{\Phi}_{\mathcal{J}_t}^{\mathrm{T}} (\bsy{I} - \bsy{P}^{(t)}) \bsy{\Phi}_{\mathcal{J}_t})^{-1}$ given the RIP with the RIC $\delta_{|\mathcal{S}|}$. For such a choice,
		\begin{equation*}
			\| \bsy{\theta} (\alpha) \|_{\infty} \leq \dfrac{1}{1-\delta_{|\mathcal{S}|}} - \alpha = \dfrac{\delta_{|\mathcal{S}|}}{(1+\delta_{|\mathcal{S}|})(1-\delta_{|\mathcal{S}|})}.
		\end{equation*}
		Thus, Lemma~\ref{lem:technicalRes} yields
		\begin{equation*}
			\| (\bsy{\Phi}_{\mathcal{J}_t}^{\mathrm{T}} (\bsy{I} - \bsy{P}^{(t)}) \bsy{\Phi}_{\mathcal{J}_t})^{-1} \|_{\infty \rightarrow \infty} \leq  \dfrac{1 + \sqrt{|\mathcal{J}_t|}\delta_{|\mathcal{S}|}}{(1+\delta_{|\mathcal{S}|})(1-\delta_{|\mathcal{S}|})},
		\end{equation*}
		which concludes the proof.
	\end{proof}
\end{thm}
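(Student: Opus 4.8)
The plan is to peel $\|\bsy{\Phi}_{\mathcal{S}}^{\mathrm{T}} \bsy{R}^{(t)}\|_{\infty \rightarrow \infty}$ down to an expression involving only the not-yet-selected correct atoms, and then combine the submultiplicativity of $\|\cdot\|_{\infty \rightarrow \infty}$ with the spectral information supplied by the RIP, the latter funnelled through Lemma~\ref{lem:technicalRes}. First I would substitute $\bsy{R}^{(t)} = (\bsy{I} - \bsy{P}^{(t)}) \bsy{\Phi}_{\mathcal{S}} \bsy{X}^{\mathcal{S}}$ and split $\bsy{\Phi}_{\mathcal{S}} \bsy{X}^{\mathcal{S}} = \bsy{\Phi}_{\mathcal{S}_t} \bsy{X}^{\mathcal{S}_t} + \bsy{\Phi}_{\mathcal{J}_t} \bsy{X}^{\mathcal{J}_t}$. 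Since the columns of $\bsy{\Phi}_{\mathcal{S}_t}$ span $\mathcal{R}(\bsy{P}^{(t)})$, the projector $\bsy{I} - \bsy{P}^{(t)}$ annihilates the first term, so $\bsy{R}^{(t)} = (\bsy{I} - \bsy{P}^{(t)}) \bsy{\Phi}_{\mathcal{J}_t} \bsy{X}^{\mathcal{J}_t}$; moreover, for $j \in \mathcal{S}_t$ the atom $\bsy{\phi}_j$ lies in $\mathcal{R}(\bsy{P}^{(t)})$, hence the rows of $\bsy{\Phi}_{\mathcal{S}}^{\mathrm{T}}(\bsy{I} - \bsy{P}^{(t)})$ indexed by $\mathcal{S}_t$ vanish and the maximum over $j \in \mathcal{S}$ in the $\|\cdot\|_{\infty \rightarrow \infty}$ norm may be restricted to $j \in \mathcal{J}_t$. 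This produces
\[
  \|\bsy{\Phi}_{\mathcal{S}}^{\mathrm{T}} \bsy{R}^{(t)}\|_{\infty \rightarrow \infty} = \|\bsy{M}\, \bsy{X}^{\mathcal{J}_t}\|_{\infty \rightarrow \infty}, \qquad \bsy{M} := \bsy{\Phi}_{\mathcal{J}_t}^{\mathrm{T}} (\bsy{I} - \bsy{P}^{(t)}) \bsy{\Phi}_{\mathcal{J}_t}.
\]

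Next I would observe that $\bsy{M}$ is symmetric, and that because $\mathcal{S}_t \subseteq \mathcal{S}$ the usual interlacing-type bounds for projected Gram matrices (\cite[Lemma 5]{cai2011orthogonal} together with \cite[Remark 1]{dai2009subspace}) place $\mathrm{spec}(\bsy{M}) \subseteq [\,1 - \delta_{|\mathcal{S}|},\, 1 + \delta_{|\mathcal{S}|}\,] \subset (0, \infty)$. Hence $\bsy{M}$ is invertible, $\bsy{M}^{-1}$ is symmetric (thus normal), and its eigenvalues lie in $[\,1/(1 + \delta_{|\mathcal{S}|}),\, 1/(1 - \delta_{|\mathcal{S}|})\,]$. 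Writing $\bsy{X}^{\mathcal{J}_t} = \bsy{M}^{-1}(\bsy{M}\, \bsy{X}^{\mathcal{J}_t})$ and applying $\|\bsy{A}\bsy{B}\|_{\infty \rightarrow \infty} \leq \|\bsy{A}\|_{\infty \rightarrow \infty}\|\bsy{B}\|_{\infty \rightarrow \infty}$ gives
\[
  \|\bsy{\Phi}_{\mathcal{S}}^{\mathrm{T}} \bsy{R}^{(t)}\|_{\infty \rightarrow \infty} = \|\bsy{M}\, \bsy{X}^{\mathcal{J}_t}\|_{\infty \rightarrow \infty} \geq \frac{\|\bsy{X}^{\mathcal{J}_t}\|_{\infty \rightarrow \infty}}{\|\bsy{M}^{-1}\|_{\infty \rightarrow \infty}},
\]
so everything reduces to a sharp upper bound on $\|\bsy{M}^{-1}\|_{\infty \rightarrow \infty}$.

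The heart of the argument — and the source of the improvement over coherence-based estimates — is precisely this last bound. The crude route $\|\bsy{M}^{-1}\|_{\infty \rightarrow \infty} \leq \sqrt{|\mathcal{J}_t|}\,\|\bsy{M}^{-1}\|_{2 \rightarrow 2} \leq \sqrt{|\mathcal{J}_t|}/(1 - \delta_{|\mathcal{S}|})$ pays a full factor $\sqrt{|\mathcal{J}_t|}$ on the whole norm. Instead I would apply Lemma~\ref{lem:technicalRes} to the normal matrix $\bsy{M}^{-1}$ with $\alpha$ taken as the midpoint of the guaranteed eigenvalue interval, $\alpha = \tfrac12\big(\tfrac{1}{1 + \delta_{|\mathcal{S}|}} + \tfrac{1}{1 - \delta_{|\mathcal{S}|}}\big) = \tfrac{1}{(1 + \delta_{|\mathcal{S}|})(1 - \delta_{|\mathcal{S}|})}$; this centring forces $\|\bsy{\theta}(\alpha)\|_{\infty} \leq \tfrac{1}{1 - \delta_{|\mathcal{S}|}} - \alpha = \tfrac{\delta_{|\mathcal{S}|}}{(1 + \delta_{|\mathcal{S}|})(1 - \delta_{|\mathcal{S}|})}$, so only the small deviation term, rather than the full norm, is charged the dimensional factor. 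Lemma~\ref{lem:technicalRes} then yields
\[
  \|\bsy{M}^{-1}\|_{\infty \rightarrow \infty} \leq \alpha + \sqrt{|\mathcal{J}_t|}\,\|\bsy{\theta}(\alpha)\|_{\infty} \leq \frac{1 + \sqrt{|\mathcal{J}_t|}\,\delta_{|\mathcal{S}|}}{(1 + \delta_{|\mathcal{S}|})(1 - \delta_{|\mathcal{S}|})},
\]
and since $|\mathcal{J}_t| = |\mathcal{S}| - t$, substituting into the previous display gives exactly the claimed inequality. The step I expect to require the most care is this one: checking that the midpoint choice of $\alpha$ is legitimate (all that is needed is that the true spectrum lies inside the stated interval, not that its endpoints are attained) and that it is the shift minimizing the bound of Lemma~\ref{lem:technicalRes}, so that $\delta_{|\mathcal{S}|}$ — and not $1$ — multiplies $\sqrt{|\mathcal{S}| - t}$ in the final expression.
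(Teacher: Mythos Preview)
Your proposal is correct and follows essentially the same approach as the paper's proof: the same reduction of $\|\bsy{\Phi}_{\mathcal{S}}^{\mathrm{T}} \bsy{R}^{(t)}\|_{\infty \rightarrow \infty}$ to $\|\bsy{M}\,\bsy{X}^{\mathcal{J}_t}\|_{\infty \rightarrow \infty}$ with $\bsy{M} = \bsy{\Phi}_{\mathcal{J}_t}^{\mathrm{T}}(\bsy{I}-\bsy{P}^{(t)})\bsy{\Phi}_{\mathcal{J}_t}$, the same submultiplicativity trick to reach $\|\bsy{X}^{\mathcal{J}_t}\|_{\infty \rightarrow \infty}/\|\bsy{M}^{-1}\|_{\infty \rightarrow \infty}$, the same spectral localisation of $\bsy{M}^{-1}$ via \cite[Lemma 5]{cai2011orthogonal} and \cite[Remark 1]{dai2009subspace}, and the same application of Lemma~\ref{lem:technicalRes} with $\alpha$ equal to the midpoint of the eigenvalue interval. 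Your added remark contrasting the midpoint bound with the crude $\sqrt{|\mathcal{J}_t|}\,\|\bsy{M}^{-1}\|_{2\rightarrow 2}$ estimate is a useful piece of intuition that the paper does not spell out.
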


Theorem~\ref{thm:newLBCorrelationResidual} essentially states that the maximal correlation obtained among the correct atoms, \textit{i.e.}, $\|\bsy{\Phi}_{\mathcal{S}}^{\mathrm{T}} \bsy{R}^{(t)} \|_{\infty \rightarrow \infty}$, is lower bounded by a quantity proportional to $\| \bsy{X}^{\mathcal{J}_t} \|_{\infty \rightarrow \infty} = \max_{j \in \mathcal{J}_t} \| \bsy{X}^{\lbrace j \rbrace} \|_1 = \max_{j \in \mathcal{J}_t} \sum_{k=1}^K |X_{j,k}|$. As already stated in  Section~\ref{subsec:contribNoisyLink}, lower bounds on $\|\bsy{\Phi}_{\mathcal{S}}^{\mathrm{T}} \bsy{R}^{(t)} \|_{\infty \rightarrow \infty}$ in the noiseless case play a role when determining the performance of SOMP when additive noise is included in the signal model. The properties (including the sharpness) of Theorem~\ref{thm:newLBCorrelationResidual} will be discussed in Section~\ref{sec:discussCompare}

\section{Related work}\label{sec:relWork}

Let us now compare our lower bound on $	\|\bsy{\Phi}_{\mathcal{S}}^{\mathrm{T}} \bsy{R}^{(t)} \|_{\infty \rightarrow \infty}$, \textit{i.e.}, Theorem~\ref{thm:newLBCorrelationResidual}, to another important one in the literature, \textit{i.e.}, Theorem~\ref{thm:frobLBCorrelationResidual}. To the best of the authors' knowledge, Theorem~\ref{thm:frobLBCorrelationResidual} was first obtained in \cite[Section 3.1]{wang2013performance} for $2$-SOMP. In the SMV case, the inequality $	\|\bsy{\Phi}_{\mathcal{S}}^{\mathrm{T}} \bsy{r}^{(t)} \|_{\infty}  \geq  \frac{\lambda_{\mathrm{min}} ( \bsy{\Phi}_{\mathcal{S}}^{\mathrm{T}} \bsy{\Phi}_{\mathcal{S}})}{\sqrt{|\mathcal{S}| - t}} \left\| \bsy{x}_{\mathcal{J}_t} \right\|_{\mathrm{2}}$ was first obtained in \cite[Section V]{cai2011orthogonal} for OMP, which immediately yields Theorem~\ref{thm:frobLBCorrelationResidual} for $K=1$ when using the inequality $\lambda_{\mathrm{min}} ( \bsy{\Phi}_{\mathcal{S}}^{\mathrm{T}} \bsy{\Phi}_{\mathcal{S}}) \geq 1-\delta_{|\mathcal{S}|}$ \cite[Remark 1]{dai2009subspace}.

\begin{thm}\label{thm:frobLBCorrelationResidual}
	Let us assume that SOMP has picked only correct atoms before iteration $t$, \textit{i.e.}, $\mathcal{S}_t \subset \mathcal{S}$, with $\mathcal{J}_t = \mathcal{S} \backslash \mathcal{S}_t$ containing the indices of the correct atoms yet to be selected at iteration $t$. If $\bsy{\Phi}$ satisfies the RIP with $|\mathcal{S}|$-th RIC $\delta_{|\mathcal{S}|}  < 1$, then
	\begin{equation}
		\|\bsy{\Phi}_{\mathcal{S}}^{\mathrm{T}} \bsy{R}^{(t)} \|_{\infty \rightarrow \infty}  \geq (1-\delta_{|\mathcal{S}|}) \dfrac{1}{\sqrt{|\mathcal{S}| - t}} \left\| \bsy{X}^{\mathcal{J}_t} \right\|_{\mathrm{F}}.
	\end{equation}
\end{thm}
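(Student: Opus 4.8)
The plan is to mirror the first part of the proof of Theorem~\ref{thm:newLBCorrelationResidual} verbatim, diverging only at the very end where the structure of the matrix norm is exploited. As in that proof, I would start from the identity $\bsy{\Phi}_{\mathcal{S}}^{\mathrm{T}} \bsy{R}^{(t)} = \bsy{\Phi}_{\mathcal{J}_t}^{\mathrm{T}} (\bsy{I} - \bsy{P}^{(t)}) \bsy{\Phi}_{\mathcal{J}_t} \bsy{X}^{\mathcal{J}_t}$ and the observation that, since the rows of $\bsy{\Phi}_{\mathcal{S}}^{\mathrm{T}} \bsy{R}^{(t)}$ indexed by $\mathcal{S}_t$ vanish, the $\| \cdot \|_{\infty \to \infty}$ norm over $\mathcal{S}$ equals the one over $\mathcal{J}_t$. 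Writing $\bsy{M} := \bsy{\Phi}_{\mathcal{J}_t}^{\mathrm{T}} (\bsy{I} - \bsy{P}^{(t)}) \bsy{\Phi}_{\mathcal{J}_t}$, which is symmetric positive definite with smallest eigenvalue at least $1 - \delta_{|\mathcal{S}|}$ by \cite[Lemma 5]{cai2011orthogonal}, the goal reduces to lower bounding $\| \bsy{M} \bsy{X}^{\mathcal{J}_t} \|_{\infty \to \infty}$.

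The key step is to pass from the $\ell_\infty \to \ell_\infty$ matrix norm to a per-column Euclidean estimate. For each column $k$, I would use that $\| \bsy{v} \|_1 \ge \| \bsy{v} \|_2$, so that $\| \bsy{M} \bsy{X}^{\mathcal{J}_t} \|_{\infty \to \infty} = \max_j \sum_k |(\bsy{M} \bsy{X}^{\mathcal{J}_t})_{j,k}| \ge \max_j \big( \sum_k (\bsy{M} \bsy{X}^{\mathcal{J}_t})_{j,k}^2 \big)^{1/2}$, and then the maximum over rows is at least the row-averaged quantity, i.e. at least $|\mathcal{J}_t|^{-1/2} \| \bsy{M} \bsy{X}^{\mathcal{J}_t} \|_{\mathrm{F}}$. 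Since $|\mathcal{J}_t| = |\mathcal{S}| - t$, this produces the $1/\sqrt{|\mathcal{S}| - t}$ factor. It then remains to bound $\| \bsy{M} \bsy{X}^{\mathcal{J}_t} \|_{\mathrm{F}}$ from below by $(1 - \delta_{|\mathcal{S}|}) \| \bsy{X}^{\mathcal{J}_t} \|_{\mathrm{F}}$, which follows because $\| \bsy{M} \bsy{Z} \|_{\mathrm{F}} \ge \sigma_{\min}(\bsy{M}) \| \bsy{Z} \|_{\mathrm{F}}$ for any $\bsy{Z}$ (apply the bound column by column using $\| \bsy{M} \bsy{z} \|_2 \ge \sigma_{\min}(\bsy{M}) \| \bsy{z} \|_2$, then sum squares), and $\sigma_{\min}(\bsy{M}) = \lambda_{\min}(\bsy{M}) \ge 1 - \delta_{|\mathcal{S}|}$.

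Alternatively — and this is probably how the original reference frames it — one can invoke the chain $\| \bsy{M} \bsy{X}^{\mathcal{J}_t} \|_{\infty \to \infty} \geq |\mathcal{J}_t|^{-1/2} \| \bsy{M} \bsy{X}^{\mathcal{J}_t} \|_{\mathrm{F}}$ directly (every $\ell_\infty\to\ell_\infty$ norm of a $d$-row matrix dominates $d^{-1/2}$ times its Frobenius norm, since $\|\bsy A\|_{\mathrm F}^2 = \sum_j \|\text{row}_j\|_2^2 \le \sum_j \|\text{row}_j\|_1^2 \le d\,\|\bsy A\|_{\infty\to\infty}^2$), and then use $\|\bsy M\bsy X^{\mathcal{J}_t}\|_{\mathrm F}\ge\lambda_{\min}(\bsy M)\|\bsy X^{\mathcal{J}_t}\|_{\mathrm F}$ together with $\lambda_{\min}(\bsy M)\ge 1-\delta_{|\mathcal{S}|}$. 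I would state the two norm inequalities as short standalone facts to keep the argument clean.

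I do not anticipate a genuine obstacle here; the only place requiring a little care is making sure the reduction from $\mathcal{S}$ to $\mathcal{J}_t$ in the matrix norm is justified exactly as in the proof of Theorem~\ref{thm:newLBCorrelationResidual} (the rows indexed by $\mathcal{S}_t$ are zero because $(\bsy{I}-\bsy{P}^{(t)})$ annihilates anything against $\bsy{\phi}_j$ for $j \in \mathcal{S}_t$), and that the Frobenius-norm lower bound $\|\bsy M\bsy Z\|_{\mathrm F}\ge\lambda_{\min}(\bsy M)\|\bsy Z\|_{\mathrm F}$ is applied with $\bsy M$ symmetric so that $\lambda_{\min}$ is indeed the smallest singular value. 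Everything else is the same RIP bookkeeping already carried out above.
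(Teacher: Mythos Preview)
Your argument is correct: the reduction from $\mathcal{S}$ to $\mathcal{J}_t$, the Frobenius estimate $\|\bsy{A}\|_{\infty\to\infty}\ge |\mathcal{J}_t|^{-1/2}\|\bsy{A}\|_{\mathrm F}$, and the eigenvalue bound $\|\bsy{M}\bsy{Z}\|_{\mathrm F}\ge\lambda_{\min}(\bsy{M})\|\bsy{Z}\|_{\mathrm F}$ with $\lambda_{\min}(\bsy{M})\ge 1-\delta_{|\mathcal{S}|}$ are all valid and assemble exactly into the stated inequality.

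The paper's own proof, however, does not redo any of this machinery. It simply observes that the $2$-SOMP version of the bound (with $\max_{j\in\mathcal{S}}\|(\bsy{R}^{(t)})^{\mathrm T}\bsy{\phi}_j\|_2$ on the left) is already established in \cite[Section~3.1]{wang2013performance}, and then invokes $\|\bsy v\|_1\ge\|\bsy v\|_2$ once to pass from $1$-SOMP to $2$-SOMP. In effect you have written out, self-contained, what that cited reference proves, whereas the paper treats it as a black box. The mathematical content is the same (RIP eigenvalue bound on $\bsy{M}$ plus the $\ell_1$--$\ell_2$ row-norm comparison); the difference is purely that your version is explicit and the paper's is a two-line reduction to prior work.
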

\begin{proof}
	Rearranging the results in \cite[Section 3.1]{wang2013performance} shows that Theorem~\ref{thm:frobLBCorrelationResidual} is true if $\max_{j \in \mathcal{S}}  \| (\bsy{R}^{(t)})^{\mathrm{T}} \bsy{\phi}_j \|_1 $ is replaced by $\max_{j \in \mathcal{S}}  \| (\bsy{R}^{(t)})^{\mathrm{T}} \bsy{\phi}_j \|_2 $, \textit{i.e.}, if $2$-SOMP is used instead of $1$-SOMP. Since $\| \bsy{x} \|_1 \geq \| \bsy{x} \|_2$ for all $\bsy{x}$,\\ $\|\bsy{\Phi}_{\mathcal{S}}^{\mathrm{T}} \bsy{R}^{(t)} \|_{\infty \rightarrow \infty} = \max_{j \in \mathcal{S}}  \| (\bsy{R}^{(t)})^{\mathrm{T}} \bsy{\phi}_j \|_1 \geq  \max_{j \in \mathcal{S}}  \| (\bsy{R}^{(t)})^{\mathrm{T}} \bsy{\phi}_j \|_2 \geq  (1-\delta_{|\mathcal{S}|})
	\| \bsy{X}^{\mathcal{J}_t} \|_{\mathrm{F}}/ \sqrt{|\mathcal{S}| - t} $, which concludes the proof.
\end{proof}
The quantity $\| \bsy{X}^{\mathcal{J}_t} \|_{\mathrm{F}}^2$ rewrites $\sum_{j \in \mathcal{J}_t} \| \bsy{X}^{\lbrace j \rbrace} \|_2^2$, which is the sum of the squared $\ell_2$-norms of each row of $\bsy{X}$ indexed by $\mathcal{J}_t$. Since each row of $\bsy{X}$ can be interpreted as the coefficient vector associated with one particular atom, $\| \bsy{X}^{\mathcal{J}_t} \|_{\mathrm{F}}^2/(|\mathcal{S}| - t)$ is the average energy of the coefficients associated with the atoms indexed by $\mathcal{J}_t$.

\section{Comparison with related works \& Discussions}\label{sec:discussCompare}

The rest of this section is dedicated to the comparison of Theorem~\ref{thm:newLBCorrelationResidual}, \textit{i.e.}, our contribution, with Theorem~\ref{thm:frobLBCorrelationResidual}. To determine which bound is the better at iteration $t$, we introduce the following quantity
\begin{equation}\label{eq:RatioFrobNewLB}
	r(|\mathcal{S}|, \mathcal{J}_t) := \dfrac{\sqrt{|\mathcal{J}_t|} (1+\delta_{|\mathcal{S}|})}{1 + \sqrt{|\mathcal{J}_t|} \; \delta_{|\mathcal{S}|} } \dfrac{\| \bsy{X}^{\mathcal{J}_t} \|_{\infty \rightarrow \infty}}{\| \bsy{X}^{\mathcal{J}_t} \|_{\mathrm{F}}},
\end{equation}
%$\| \bsy{B} \|_{\infty \rightarrow \infty} \geq \| \bsy{B} \|_{\mathrm{F}}/ \sqrt{s}$
which is the ratio of the lower bound of Theorem~\ref{thm:newLBCorrelationResidual} to that of Theorem~\ref{thm:frobLBCorrelationResidual}. Our contribution thereby improves the analysis of SOMP when $r(|\mathcal{S}|, \mathcal{J}_t)  > 1$. For $\bsy{B} \in \mathbb{R}^{|\mathcal{J}_t| \times K}$, we have $\| \bsy{B} \|_{\infty \rightarrow \infty} \leq \sqrt{K} \| \bsy{B} \|_{2 \rightarrow 2} \leq \sqrt{K} \| \bsy{B} \|_{\mathrm{F}}$ \cite{golub2012matrix}. For $b_j := \| \bsy{B}^{\lbrace j \rbrace}\|_2$ (where $\bsy{b} \in \mathbb{R}^{|\mathcal{J}_t|}$), we obtain  $\| \bsy{B} \|_{\infty \rightarrow \infty} = \max_{j \in \lbrack |\mathcal{J}_t| \rbrack} \| \bsy{B}^{\lbrace j \rbrace} \|_1 \geq \max_{j \in \lbrack |\mathcal{J}_t| \rbrack} \| \bsy{B}^{\lbrace j \rbrace} \|_2 = \| \bsy{b} \|_{\infty} \geq (1/\sqrt{|\mathcal{J}_t|}) \| \bsy{b} \|_{2} = (1/\sqrt{|\mathcal{J}_t|}) \| \bsy{B} \|_{\mathrm{F}}$. As a result,
\begin{align*}%\label{eq:RatioFrobNewLB2}
	\dfrac{1+\delta_{|\mathcal{S}|}}{1+\sqrt{|\mathcal{J}_t|}\delta_{|\mathcal{S}|}}  \leq r(|\mathcal{S}|, \mathcal{J}_t) & \leq \dfrac{(1+\delta_{|\mathcal{S}|})\sqrt{|\mathcal{J}_t|} \sqrt{K}}{1+\sqrt{|\mathcal{J}_t|}\delta_{|\mathcal{S}|}}\\
	& \leq  \sqrt{K}\dfrac{1+\delta_{|\mathcal{S}|}}{\delta_{|\mathcal{S}|}}.
\end{align*}
The proposed analysis of $r(|\mathcal{S}|, \mathcal{J}_t)$ is realized for four different reconstruction scenarios that are discussed hereafter.
\subsection{Case 1: A single dominant row within $\bsy{X}$}
We assume that $\| \bsy{X}^{\mathcal{J}_t} \|_{\infty \rightarrow \infty} \simeq \| \bsy{X}^{\lbrace j_d \rbrace} \|_1$ and 
$\| \bsy{X}^{\mathcal{J}_t} \|_{\mathrm{F}} \simeq \| \bsy{X}^{\lbrace j_d \rbrace} \|_2$ for some $j_d \in \mathcal{S}$. This situation occurs whenever the entries of the $j_d$th row of $\bsy{X}$ have magnitudes overwhelmingly higher than those of all the other rows combined. Then, 
\begin{equation}\label{eq:case1}
	r(|\mathcal{S}|, \mathcal{J}_t) \simeq \dfrac{\sqrt{|\mathcal{J}_t|} (1+\delta_{|\mathcal{S}|})}{1 + \sqrt{|\mathcal{J}_t|} \; \delta_{|\mathcal{S}|} } \dfrac{\| \bsy{X}^{\lbrace j_d \rbrace} \|_1}{\| \bsy{X}^{\lbrace j_d \rbrace} \|_2} \geq \dfrac{\sqrt{|\mathcal{J}_t|} (1+\delta_{|\mathcal{S}|})}{1 + \sqrt{|\mathcal{J}_t|} \; \delta_{|\mathcal{S}|} }
\end{equation}
where $1 \leq \| \bsy{x} \|_1/ \| \bsy{x} \|_2 \leq \sqrt{K}$ for all $\bsy{x} \in \mathbb{R}^K$. Theorem~\ref{thm:newLBCorrelationResidual} always outperforms Theorem~\ref{thm:frobLBCorrelationResidual} in this case since the RHS of Equation~(\ref{eq:case1}) is higher than $1$. Interestingly, it remains true in the SMV setting, thereby making our contribution superior to the state-of-the-art result of \cite{cai2011orthogonal} for OMP. If $\bsy{X}^{\lbrace j_d \rbrace}$ is $1$-sparse, then $r(|\mathcal{S}|, \mathcal{J}_t) = \sqrt{|\mathcal{J}_t|} (1+\delta_{|\mathcal{S}|})/(1 + \sqrt{|\mathcal{J}_t|} \; \delta_{|\mathcal{S}|})$ while $r(|\mathcal{S}|, \mathcal{J}_t) = \sqrt{K} \sqrt{|\mathcal{J}_t|} (1+\delta_{|\mathcal{S}|})/(1 + \sqrt{|\mathcal{J}_t|} \; \delta_{|\mathcal{S}|})$ whenever the entries of $\bsy{X}^{\lbrace j_d \rbrace}$ have identical absolute values. These first observations suggest that the improvements resulting from using Theorem~\ref{thm:newLBCorrelationResidual} instead of Theorem~\ref{thm:frobLBCorrelationResidual} increase with $K$.
\subsection{Case 2: Identical magnitudes}

We assume that $|X_{j,k}| \simeq \mu_X > 0$ for each $(j,k) \in \mathcal{J}_t \times \lbrack K \rbrack$. Thus, we have $\| \bsy{X}^{\mathcal{J}_t} \|_{\infty \rightarrow \infty} \simeq K \mu_X$ and $\| \bsy{X}^{\mathcal{J}_t} \|_{\mathrm{F}} \simeq \sqrt{K} \sqrt{|\mathcal{J}_t|} \mu_X$. As a result, we obtain
\begin{equation}\label{eq:case2}
	r(|\mathcal{S}|, \mathcal{J}_t) \simeq   \dfrac{\sqrt{K} (1+\delta_{|\mathcal{S}|})}{1 + \sqrt{|\mathcal{J}_t|} \; \delta_{|\mathcal{S}|} }.
\end{equation}
As depicted in Figure~\ref{fig:case2LBComparison}, the situation might be favorable to both lower bounds depending on the value of $K$, $|\mathcal{J}_t|$, and $\delta_{|S|}$. In this case, Theorem~\ref{thm:newLBCorrelationResidual} is always worse than its counterpart if $K = 1$, $|\mathcal{J}_t| > 1$, and $\delta_{|\mathcal{S}|} > 0$. As a general rule, our contribution tends to get better than Theorem~\ref{thm:frobLBCorrelationResidual} whenever $K$ increases, the RIC $\delta_{|\mathcal{S}|}$ approaches $0$, or the number of correct atoms yet to be recovered, \textit{i.e.}, $|\mathcal{J}_t|$, tends to $1$.

\begin{figure}[!h]%
	\centering
	\includegraphics[width=12.5cm]{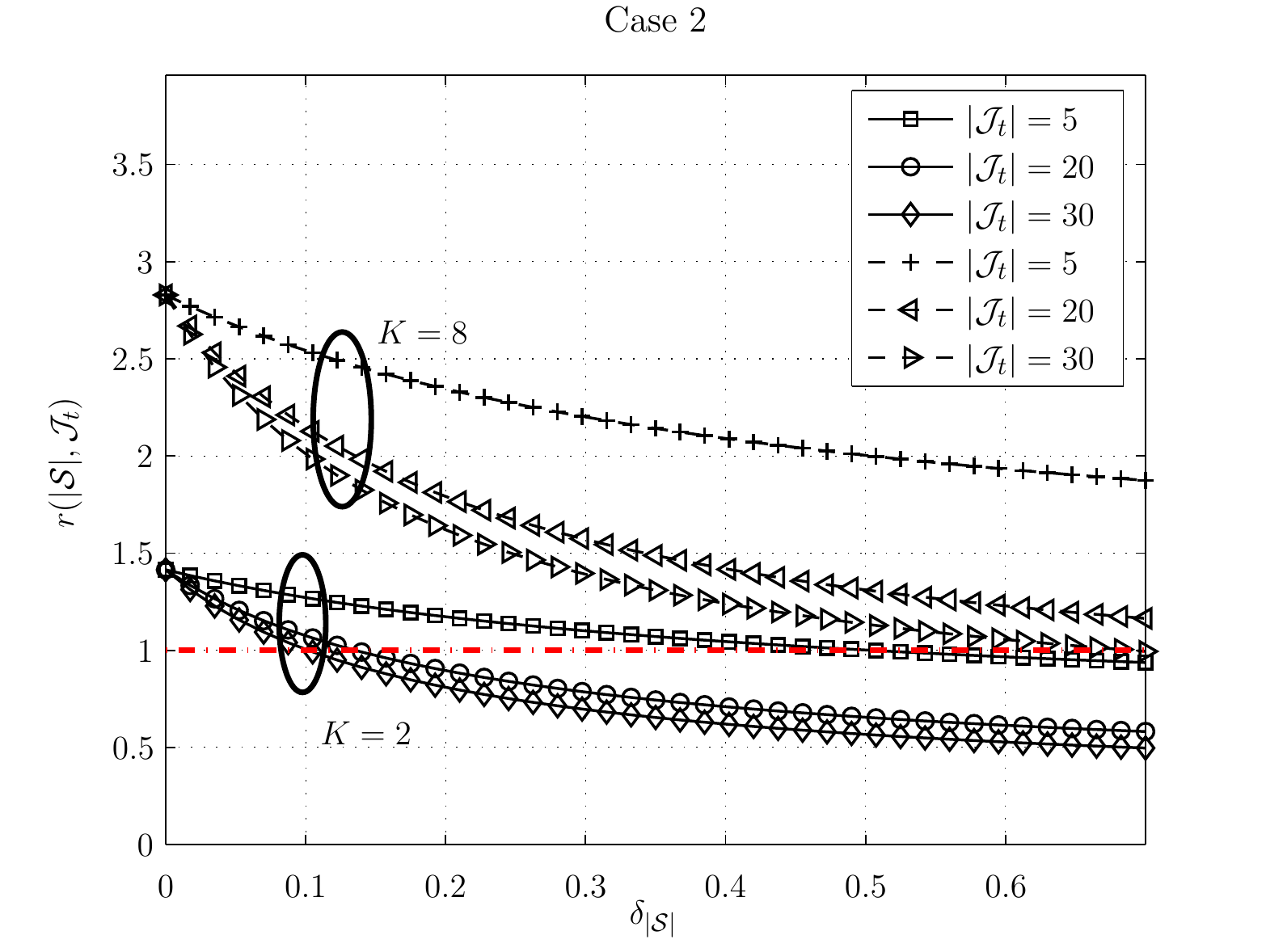}
	\caption{Analysis of $r(|\mathcal{S}|, \mathcal{J}_t)$ in Case 2  for various $2$-tuples $(|\mathcal{J}_t|, \delta_{\mathcal{S}})$ -- The dot-dash horizontal red line determines when both theorems provide equivalent bounds, \textit{i.e.}, $r(|\mathcal{S}|, \mathcal{J}_t) = 1$.}%
	\label{fig:case2LBComparison}%
\end{figure}

\subsection{Case 3: Last iteration}
We assume that only one correct atom has yet to be picked, \textit{i.e.}, $\mathcal{J}_t = \lbrace j_f \rbrace$. Thus,
\begin{equation}
	r(|\mathcal{S}|, \mathcal{J}_t) =   \dfrac{\| \bsy{X}^{\lbrace j_f \rbrace} \|_{1}}{\| \bsy{X}^{\lbrace j_f \rbrace} \|_{2}} \geq 1.
\end{equation}

If the row $\bsy{X}^{\lbrace j_f \rbrace}$ is $1$-sparse, then $r(|\mathcal{S}|, \mathcal{J}_t) = 1$ while $r(|\mathcal{S}|, \mathcal{J}_t) = \sqrt{K}$ whenever its entries have identical absolute values. Both theorems deliver the same performance within the SMV framework but differ as soon as $K > 1$.

\subsection{Case 4: ``Perfect'' measurement matrix}
Let us assume that $\delta_{|\mathcal{S}|} = 0$. Hence,
\begin{equation}
	r(|\mathcal{S}|, \mathcal{J}_t) = \sqrt{|\mathcal{J}_t|}  \dfrac{\| \bsy{X}^{\mathcal{J}_t} \|_{\infty \rightarrow \infty}}{\| \bsy{X}^{\mathcal{J}_t} \|_{\mathrm{F}}} \geq 1.
\end{equation}
Our contribution is at least equivalent to the state-of-the-art bound in this case. As stated in Remark~\ref{rem:newLBSharp}, Theorem~\ref{thm:newLBCorrelationResidual} is sharp for  $\delta_{|\mathcal{S}|} = 0$, \textit{i.e.}, $\|\bsy{\Phi}_{\mathcal{S}}^{\mathrm{T}} \bsy{R}^{(t)} \|_{\infty \rightarrow \infty}  =  \| \bsy{X}^{\mathcal{J}_t} \|_{\infty \rightarrow \infty}$.
\begin{rem}\label{rem:newLBSharp}
	$\delta_{|\mathcal{S}|} = 0 \Rightarrow \|\bsy{\Phi}_{\mathcal{S}}^{\mathrm{T}} \bsy{R}^{(t)} \|_{\infty \rightarrow \infty}  =  \| \bsy{X}^{\mathcal{J}_t} \|_{\infty \rightarrow \infty}$.
\end{rem}
\begin{proof} For each $j_1$, $j_2 \in \mathcal{S}$ such that $j_1 \neq j_2$, we have $|\langle \bsy{\phi}_{j_1}, \bsy{\phi}_{j_2} \rangle | \leq \delta_2 \|\bsy{\phi}_{j_1}\|_2 \|\bsy{\phi}_{j_2}\|_2$  (see \cite[Lemma 2.1]{candes2008restricted}) where $\delta_2 \leq \delta_{|\mathcal{S}|} = 0$ because the RIC is monotonically increasing, \textit{i.e.}, $\delta_s \leq \delta_{s+1}$. Hence, all the atoms comprised within $\bsy{\Phi}_{\mathcal{S}}$ are orthogonal to each other. Similarly to the proof of Theorem~\ref{thm:newLBCorrelationResidual}, we have $\bsy{\Phi}_{\mathcal{S}}^{\mathrm{T}} \bsy{R}^{(t)} =
	\bsy{\Phi}_{\mathcal{S}}^{\mathrm{T}} (\bsy{I} - \bsy{P}^{(t)}) \bsy{\Phi}_{\mathcal{S}} \bsy{X}^{\mathcal{S}} = \bsy{\Phi}_{\mathcal{S}}^{\mathrm{T}}  \bsy{\Phi}_{\mathcal{J}_t} \bsy{X}^{\mathcal{J}_t}$ and $\| \bsy{\Phi}_{\mathcal{S}}^{\mathrm{T}} \bsy{R}^{(t)} \|_{\infty \rightarrow \infty} = \| \bsy{\Phi}_{\mathcal{J}_t}^{\mathrm{T}}  \bsy{\Phi}_{\mathcal{J}_t} \bsy{X}^{\mathcal{J}_t} \|_{\infty \rightarrow \infty}$. The vanishing of the orthogonal projection matrix stems from the orthogonality of the atoms indexed by $\mathcal{S}$. The matrix $(\bsy{I} - \bsy{P}^{(t)})$ indeed projects onto $\mathcal{R}(\boldsymbol{\Phi}_{\mathcal{S}_{t}})^{\perp}$ and  $\mathcal{R}(\boldsymbol{\Phi}_{\mathcal{J}_{t}}) \subset \mathcal{R}(\boldsymbol{\Phi}_{\mathcal{S}_{t}})^{\perp}$ since $\mathcal{R}(\boldsymbol{\Phi}_{\mathcal{S}_{t}}) \perp \mathcal{R}(\boldsymbol{\Phi}_{\mathcal{J}_{t}})$. As $\delta_{|\mathcal{S}|} = 0$, we have \cite[Equation 6.2]{foucart2013mathematical} $\| \bsy{\Phi}_{\mathcal{J}_t}^{\mathrm{T}}  \bsy{\Phi}_{\mathcal{J}_t} - \bsy{I} \|_{2 \rightarrow 2} = 0$ so that $\bsy{\Phi}_{\mathcal{J}_t}^{\mathrm{T}}  \bsy{\Phi}_{\mathcal{J}_t} = \bsy{I}$.
\end{proof}

\section*{Acknowledgments}

The authors would like to thank the Belgian ``Fonds de la recherche scientifique'' for having funded this research.

\newpage
\nocite{*}
\bibliographystyle{abbrv}
\bibliography{mybib}

\end{document}